\renewcommand{\H}{\mathbb{H}}
\newcommand{\Z}{\mathbb{Z}}
\newcommand{\R}{\mathbb{R}}
\newcommand{\C}{\mathbb{C}}
\newcommand{\tr}{\mathop{\mathrm{tr}}\nolimits}
\newcommand{\on}[2]{\mathop{\null#2}\limits^{#1}}
\newcommand{\oover}[1]{\on{\circ}{#1}}
\newcommand{\curl}{\mathop{\mathrm{curl}}\nolimits}
\newcommand{\norm}[1]{\left\|#1\right\|}
\newcommand{\pair}[1]{\langle#1\rangle}
\newtheorem{thm}{Theorem}
\newtheorem{prop}{Proposition}
\theoremstyle{definition}
\newtheorem*{definition}{Definition}
\theoremstyle{remark}
\newtheorem*{remark}{Remark}
\gdef\am{&}%
\let\bs\cr%
\gdef\endddarray{%
\endarray
\catcode`\&=4%
\let\cr\bs
}
\begin{document}

\title[A new doubly discrete analogue of smoke ring flow]{A new doubly
  discrete analogue of smoke ring flow and the real time simulation of
  fluid flow} 

\author{Ulrich Pinkall}
\address{Institut f\"ur Mathematik, Technische Universit\"at Berlin\\
  Stra{\ss}e des 17. Juni 136\\ 10623 Berlin\\ Germany}
\email{pinkall@math.tu-berlin.de} 
\author{Boris Springborn}
\email{springb@math.tu-berlin.de}
\author{Steffen Wei{\ss}mann}
\email{weissman@math.tu-berlin.de} \thanks{The authors are supported
  by DFG Research Center \textsc{Matheon}.}

\begin{abstract}
  Modelling incompressible ideal fluids as a finite collection of
  vortex filaments is important in physics (super-fluidity, models for
  the onset of turbulence) as well as for numerical algorithms used in
  computer graphics for the real time simulation of smoke.  Here we
  introduce a time-discrete evolution equation for arbitrary closed
  polygons in 3-space that is a discretisation of the localised
  induction approximation of filament motion. This discretisation
  shares with its continuum limit the property that it is a completely
  integrable system. We apply this polygon evolution to a significant
  improvement of the numerical algorithms used in Computer Graphics.
\end{abstract}

\maketitle

\section{Introduction}
\label{intro}

The motion of vortex filaments in an incompressible, inviscid fluid has aroused
considerable interest in quite different areas:

\medskip\noindent \textbf{Differential geometry.} The limiting case of 
infinitely thin vortex filaments leads to an evolution equation for closed 
space curves $\gamma$,
\begin{equation}
\label{smokeringflow}
\dot \gamma = \gamma' \times \gamma''.
\end{equation}
Equation~(\ref{smokeringflow}) was discovered in the beginning of the 20th 
century by Levi-Civita and his student Da Rios \cite{DaRios} and is called the 
\emph{smoke ring flow} or \emph{localised induction approximation}. In 1972 
Hasimoto \cite{Hasimoto} discovered that the smoke ring flow is in fact a 
completely integrable Hamiltonian system equivalent to the non-linear 
Schr{\"o}dinger equation. See~\cite{Ricca} for more details on the history of 
the smoke ring equation. Subsequently the smoke ring flow has been studied by 
differential geometers as a natural evolution equation for space curves 
\cite{CaliniIvey,CieslinskiGragertSym,Ivey,langerPerline}. 
Also discrete versions of the smoke ring flow in the form of completely 
integrable evolution equations for polygons with fixed edge length have been 
developed \cite{tim,timPaper,DoliwaSantini}.

\medskip\noindent
\textbf{Fluid dynamics.}
As will be explained below, for applications in fluid mechanics a finite 
thickness of the vortex filaments has to be taken into account. The transition 
from infinitely thin filaments to filaments of finite thickness involves the 
incorporation of long range interactions (governed by the Biot-Savart law) 
between different filaments and different parts of the same filament into the 
purely local evolution equation~(\ref{smokeringflow}). The resulting evolution 
of vortex filaments has been extensively studied both numerically and in 
the context of explaining the onset of turbulence \cite{Chorin_book}. Including 
in addition a small amount of viscosity in the equations leads to striking 
physical effects like vortex reconnection 
\cite{Koplik_Levine,Kivotides_Leonard,Chatelain_Kivotides_Leonard} and 
numerical techniques like ``hairpin removal'' \cite{Chorin_1,Chorin_2}.

\medskip\noindent 
\textbf{Computer graphics.} Filament-based methods
for fluid simulation are becoming important in Computer Graphics for
special effects in motion pictures and for real time applications like
computer games \cite{AN05,ANSN06}. Here the emphasis is on physical
correctness and speed rather than numerical accuracy. Filament methods
are ideal for these applications because complicated fluid motions can
be created by a graphics designer by modelling the initial positions
and strengths of the filaments. Moreover, filament methods work in
unbounded space rather than in a bounded box (as is the case for
grid-based methods \cite{Stam_Stable_fluids}). This is desirable for
the simulation of smoke.

\medskip

The main goal of this paper is to improve the numerical algorithms currently 
used in Computer Graphics by applying the recent knowledge from Discrete 
Differential Geometry to the motion of polygonal smoke rings. Our
method makes it possible to model thin filaments by polygons with 
arbitrarily few vertices. For comparison, using current methods to
model a circular smoke ring which is thin enough to entrain smoke in a
torus shape, it necessary to use a regular polygon with at least
$800$ vertices. 

In Section \ref{filaments} we will explain the evolution equation for
systems of vortex filaments that we will discretise. The resulting
equations of motion are still Hamiltonian like the smoke ring
flow~(\ref{smokeringflow}). However, since already Poincar\'e knew
that a system of vortex filaments consisting of more than three
parallel straight lines (the ``$n$-vortex problem'') fails to be an
integrable system~\cite[p.~58f]{arnold}, we do not believe that this
system is an integrable Hamiltonian system. Nevertheless it is a small
perturbation of the integrable system constituted by the limit of
infinitely thin filaments. This might be interesting for future
investigations along the lines of KAM theory.

In Section~\ref{polygonalFilaments} we consider polygonal vortex filaments. In
this case, there is an elementary formula (\ref{blurredAntiDerivative}) for the Biot-Savart integral.

In Section~\ref{darboux} we will develop an extension of the known
discrete-time smoke ring flow for polygons of constant edge lengths to
arbitrary polygons. This is needed because after including the long
range Biot-Savart interactions, the lengths of the edges will be no
longer constant in time. 

In the theory of integrable systems it is known at least since the 1980s that 
integrable difference equations may be interpreted as Darboux transformations of 
integrable differential equations~\cite{LeviBenguria,Levi,NijhoffQuispelCapel}. 
In the meantime, this seminal discovery has lead to a reversed point of view, 
where the discrete integrable systems are considered fundamental and the 
continuous systems appear as smooth limits (see for 
example~\cite{Adler_Bobenko_Suris} and the references therein). In this vein, 
we will in Section~\ref{darboux} define the discrete-time integrable system in 
terms of iterated Darboux transformations of polygons and show afterwards that 
the smoke ring flow is obtained as a smooth limit.
 
In Section~\ref{simulation} we will describe our numerical method that
very efficiently models the motion of fluids near the smoke ring
limit.

\section{Euler's Equation for Vortex Filaments}
\label{filaments}
Consider an incompressible, inviscid fluid in euclidean 3-space whose velocity 
field $u$ vanishes at infinity and whose vorticity $\omega = \curl u$ is 
compactly supported. Then $u$ can be reconstructed from $\omega$ by the 
\emph{Biot-Savart} formula
\begin{equation}
\label{biot-savart}
u(x) = -\frac{1}{4 \pi} \int_{\R^3} \frac{x-z}{\norm{x-z}^3} \times \omega(z) 
\, dz.
\end{equation}
The equation of motion can then be written as
\begin{equation}
\label{eq:eq_of_motion}
\dot \omega = [\omega, u].
\end{equation}
Viewed as an evolution equation on the vector space $\mathcal{M}$ of compactly 
supported divergence-free vector fields on $\mathbb{R}^3$ this is a Hamiltonian 
system: A symplectic form~$\sigma$ on $\mathcal{M}$ is defined as follows. Let 
$\omega\in\mathcal{M}$ and $\dot \omega, \oover{\omega}\in 
T_{\omega}\mathcal{M}$. Then
\begin{equation}
\sigma_\omega (\dot \omega, \oover{\omega}) = \int_{\mathbb{R}^3} \det(\omega, 
\dot \omega, \oover{\omega}).
\end{equation}
Let $H:\mathcal{M}\rightarrow\R$ be the quadratic function
\begin{equation}
H = \int\!\!\!\int \frac{\pair{\omega(x), \omega(y)}}{\norm{x - y}} \, dx\, dy,
\end{equation}
where $\pair{\cdot,\cdot}$ is the euclidean scalar product on $\R^3$. Then $H$ 
is the Hamiltonian for the dynamical system~(\ref{eq:eq_of_motion}). See 
\cite{arnold, EaM} for more details on this Hamiltonian description of 
ideal fluids.

If the vorticity of a fluid is concentrated on a closed curve $\gamma$ in a 
delta-function like manner, by Equation~(\ref{biot-savart}) the resulting 
velocity field $u$ becomes

\begin{equation}
\label{curve-biot-savart}
u(x) = -\frac{\Gamma}{4 \pi} \oint \frac{x-\gamma(s)}{\|x-\gamma(s)\|^3} \times 
\gamma'(s) \, ds.
\end{equation}
Here $\Gamma$ is the circulation around the filament. The problem with
Equation~(\ref{curve-biot-savart}) is that in order to determine the motion of $\gamma$
itself, $u$ has to be evaluated on $\gamma$, which results in a logarithmically
divergent integral.
Usually, this problem is handled by considering a vorticity field concentrated 
in a tube around $\gamma$ of small but finite radius $r$. For small $r$ the 
velocity in this tube is dominated by a term proportional to the localised 
induction approximation. (See, for example,~\cite[p.~36f]{Saffman}.) Here we 
want to derive the smoke ring flow by taking the limit $r\rightarrow 0$. In 
order to prevent vortex filaments acquiring infinite speed, one has to scale 
the circulation $\Gamma$ down to zero when performing the limit to infinitely 
thin filaments. This means that the fluid velocity (\ref{curve-biot-savart}) 
goes to zero as well.

\begin{figure}
\centering\includegraphics[width=0.5\textwidth]{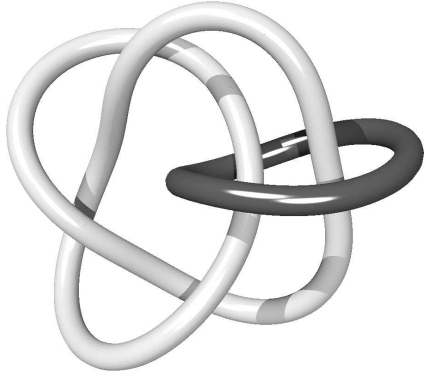}
\caption{The space of links as the phase space for vortex filaments.}
\label{fig:link}
\end{figure}

The resulting picture is then as follows: The fluid is completely at rest away 
from the filaments while the filaments just cut through the fluid with finite
speed according to the smoke ring flow:

\begin{equation}
\label{smokeringflow_links}
\dot \gamma_j = K_j \, \gamma_j' \times \gamma_j''.
\end{equation}
Here the constants $K_j$ account for the fact that the circulation of the
different filaments $\gamma_j$ might go to zero at a different rate.

Equation~(\ref{smokeringflow_links}) can be viewed as a completely integrable Hamiltonian
system on the space of weighted links (see Figure~\ref{fig:link}) endowed with the symplectic form

\begin{equation}
\label{symplectic_links}
\sigma_\gamma (\dot \gamma, \oover{\gamma}) = \sum_j K_j \oint_{\gamma_j}
\det(\gamma_j', \dot \gamma, \oover{\gamma}).
\end{equation}
For single curves this symplectic form is due to V.~I. Arnold \cite{arnold}.
The corresponding Hamiltonian is a weighted sum of the filament lengths
\begin{equation*}
H = \sum_j K_j \ \mathrm{Length}(\gamma_j).
\end{equation*}
Equation~(\ref{smokeringflow_links}) can be obtained (using a simple renormalisation of 
time) as a limit as $a \rightarrow 0$ of the following system: Stick with
(\ref{symplectic_links}) as the symplectic form, with $K_j$ replaced by the
non-zero circulation $\Gamma_j$ around $\gamma_j$. As a Hamiltonian, use 
\begin{equation*}
H = \sum_{i, j} \frac{\Gamma_i \Gamma_j}{8 \pi} \oint\!\!\oint 
\frac{\pair{\gamma_i'(s), \gamma_j'(\tilde s)}}{\sqrt{a^2 + \norm{\gamma_i(s) - 
\gamma_j(\tilde s)}^2}} \, ds \, d \tilde s.
\end{equation*}
The resulting equation of motion is
\begin{equation}
\label{filamentEquation}
\dot \gamma_i(s) = - \sum_j \frac {\Gamma_j}{4\pi} \oint 
\frac{\gamma_i(s) -\gamma_j(\tilde s)}%
{\sqrt{a^2+|\gamma_i(s)-\gamma_j(\tilde s)|^2}^{\,3}} \times \gamma_j'(\tilde s)
 \, d\tilde s.
\end{equation}
This equation of motion~(\ref{filamentEquation}) can also be derived
as follows:
\begin{itemize}
  \item Smooth the delta-function like vorticity field $\omega_0$ of the link by
  a suitable convolution kernel and obtain
\begin{equation*}
\omega(x) = \frac{3 a^2}{4\pi}
\int_{\R^3}\frac{\omega_0(y)}{\sqrt{a^2+|x-y|^2}^{\,5}} \, dy.
\end{equation*}
\item Compute the corresponding velocity field $u$ with $\curl u = \omega$:
\begin{equation}
\label{velocityField}
u(x) = - \frac {\Gamma}{4\pi} \sum_j \oint 
\frac{x-\gamma_j(s)}{\sqrt{a^2+|x-\gamma_j(s)|^2}^{\,3}} \times \gamma_j'(s) \, ds.
\end{equation}
\item Evaluate $u$ on the filaments to obtain (\ref{filamentEquation}).
\end{itemize}
To summarise: We model fluid motion near the filament limit by a
Hamiltonian system on the space of weighted links. This system is
still Hamiltonian but no longer integrable. Nevertheless it still has
all the constants of motion induced by invariance with respect to the
euclidean symmetry group.  For example the weighted sum of the area
vectors
\begin{equation*}
  A = \sum_j \Gamma_j \oint \gamma_j' \times \gamma_j
\end{equation*}
is one of the preserved quantities. (Compare
Theorem~\ref{areaInvariant} of Section~\ref{darboux}.)

The physical approximation implicit in our model is that we ignore possible 
deformations of the internal structure of the filaments and reduce everything 
to the evolution of the filament curves. The finite thickness of the filaments 
is taken into account by applying a fixed convolution kernel.

\section{Polygonal Vortex Filaments}
\label{polygonalFilaments}

In order to develop a numerical method for modelling fluid motion near the 
filament limit we have to discretise the vortex filaments, {i.e.}~we 
replace them by 
polygons. If $\gamma$ is a piecewise linear parametrisation of a closed 
polygon, on each edge we have $\gamma''=0$ and we find an explicit 
anti-derivative for the integrands of equation~(\ref{velocityField}):
\begin{equation}
\label{blurredAntiDerivative}
\Bigg( \frac {\pair {\gamma, \gamma'}}{\sqrt {a^2+|\gamma|^2}\,\, 
(|\gamma'|^2 a^2+|\gamma \times \gamma'|^2)} \gamma \times \gamma' \Bigg)' 
= \frac{\gamma \times \gamma'}{\sqrt{a^2+|\gamma|^2}^{\,3}}.
\end{equation}
Here we have abbreviated $x - \gamma_j(s)$ to $\gamma$, $\gamma_j'(s)$ to
$\gamma'$ and the prime is derivation with respect to $s$.

Inspection of Equation~(\ref{blurredAntiDerivative}) reveals the following problem:
The two adjacent edges have no influence at all on the velocity of a vertex.
This amounts to effectively employing a distance cut-off in order to regularise
the singular integral (\ref{curve-biot-savart}) for points on $\gamma$. It is
known \cite{Saffman} that this is roughly equivalent to modelling vortex tubes
of thickness equal to the edge length of the polygon. Using this model we
would therefore be unable to model thin (and therefore fast) filaments without
using excessively many edges for each polygon.

The contribution of local effects behaves like the smoke ring flow and the 
resulting equation of motion for a vertex $\gamma_i$ of a polygonal vortex 
filament $\gamma$ is then
\begin{equation}
\label{eq:vertex_motion}
\dot\gamma_i = u(\gamma_i) + \lambda \kappa_i b_i,
\end{equation}
where $u$ is given by Equation~(\ref{velocityField}) using
(\ref{blurredAntiDerivative}), $\kappa_i b_i$ denotes
curvature times binormal at $\gamma_i$, and $\lambda$ is constant for
fixed $a$.  Since the non-local effects quickly destroy any arc-length
parametrisation (i.e.~the lengths of the different edges of the
polygon) and we do not have an adequate notion of curvature for
arbitrary polygons, we can not evaluate~(\ref{eq:vertex_motion})
directly.

On the other hand, for polygons with constant edge lengths it is known that 
the doubly discrete smoke ring (or Hasimoto) flow~\cite{timPaper} captures 
excellently the qualitative behaviour of the smooth smoke ring flow. In the 
next section we will discuss a version of this doubly 
discrete smoke ring flow which works also for polygons with varying edge 
lengths.

\section{Darboux Transformation of Polygons}
\label{darboux}

In this section we develop a discrete-time evolution for closed polygons that 
has the smoke ring flow (\ref{smokeringflow}) as a limit when the polygon 
approaches a smooth curve and the time-step goes to zero. This evolution 
(obtained by iterating so-called Darboux transformations) shares with its 
continuum limit the property that it is a completely integrable system in the 
sense that it comes from a Lax pair
of quaternionic $2 \times 2$-matrices with a 
spectral parameter. (This system therefore fits into the framework of~\cite{Bobenko_Suris}.) The
constants of the motion of the discrete
system converge to constants of the motion of the smooth system in the limit.

 Let $\gamma:\Z\rightarrow\R^3$ be an immersed polygon in 
 $\R^3$, where \emph{immersed} means that $\gamma_i \neq \gamma_{i+1}$ for all 
 $i \in \Z$, and let $S_i=\gamma_{i+1}-\gamma_i$. If $\gamma$ is periodic with some
 period $n$, then the polygon is \emph{closed} and $\gamma$ may be interpreted as a
 function on $\Z/n\Z$. In the following, we identify $\R^3$
 with the imaginary quaternions $\mathrm{Im}\,\H=\{xi+yj+zk\,|\,x,y,z\in\R\}$.
 \begin{definition}
 A polygon $\eta$ is called a \emph{Darboux transform} of $\gamma$ with twist 
 parameter $r \in \R$ and distance $l > 0$, if $\|\eta_i - \gamma_i\| = l$ for 
 all $i \in \Z$, and the normalised difference vectors~$T_i$ defined by $ 
 l T_i = \eta_i - \gamma_i $ satisfy the quaternionic equation
\begin{equation}
\label{d2}
 T_{i+1} = (-r + l T_i - S_i) T_i (-r + l T_i - S_i)^{-1}.
\end{equation}
\end{definition}

The Darboux transformation of polygons and its relationship with the
nonlinear Schr\"odinger equation and smoke ring flow was treated
in~\cite{timPaper} under the assumption that the polygon $\gamma$ has
constant edge length.  To drop this assumption was suggested to us by
Tim Hoffmann \cite{timOral}.

Geometrically, Equation~(\ref{d2}) has the following meaning (see 
Figure~\ref{fig:darboux}).
\begin{figure}
\centering
\begin{picture}(0,0)%
\includegraphics{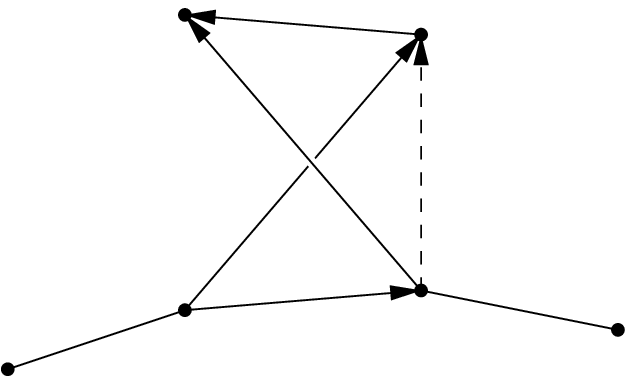}%
\end{picture}%
\setlength{\unitlength}{4144sp}%
\begingroup\makeatletter\ifx\SetFigFontNFSS\undefined%
\gdef\SetFigFontNFSS#1#2#3#4#5{%
  \reset@font\fontsize{#1}{#2pt}%
  \fontfamily{#3}\fontseries{#4}\fontshape{#5}%
  \selectfont}%
\fi\endgroup%
\begin{picture}(2860,1892)(1046,-1626)
\put(2971, 29){\makebox(0,0)[lb]{\smash{{\SetFigFontNFSS{10}{12.0}{\familydefault}{\mddefault}{\updefault}{\color[rgb]{0,0,0}$\eta_i$}%
}}}}
\put(2926,-1456){\makebox(0,0)[lb]{\smash{{\SetFigFontNFSS{10}{12.0}{\familydefault}{\mddefault}{\updefault}{\color[rgb]{0,0,0}$\gamma_{i+1}$}%
}}}}
\put(3061,-646){\makebox(0,0)[lb]{\smash{{\SetFigFontNFSS{10}{12.0}{\familydefault}{\mddefault}{\updefault}{\color[rgb]{0,0,0}$lT_i-S_i$}%
}}}}
\put(1801,119){\makebox(0,0)[lb]{\smash{{\SetFigFontNFSS{10}{12.0}{\familydefault}{\mddefault}{\updefault}{\color[rgb]{0,0,0}$\eta_{i+1}$}%
}}}}
\put(1891,-466){\makebox(0,0)[lb]{\smash{{\SetFigFontNFSS{10}{12.0}{\familydefault}{\mddefault}{\updefault}{\color[rgb]{0,0,0}$lT_{i+1}$}%
}}}}
\put(1981,-961){\makebox(0,0)[lb]{\smash{{\SetFigFontNFSS{10}{12.0}{\familydefault}{\mddefault}{\updefault}{\color[rgb]{0,0,0}$lT_i$}%
}}}}
\put(2386,-1456){\makebox(0,0)[lb]{\smash{{\SetFigFontNFSS{10}{12.0}{\familydefault}{\mddefault}{\updefault}{\color[rgb]{0,0,0}$S_i$}%
}}}}
\put(1801,-1501){\makebox(0,0)[lb]{\smash{{\SetFigFontNFSS{10}{12.0}{\familydefault}{\mddefault}{\updefault}{\color[rgb]{0,0,0}$\gamma_i$}%
}}}}
\put(2386, 29){\makebox(0,0)[lb]{\smash{{\SetFigFontNFSS{10}{12.0}{\familydefault}{\mddefault}{\updefault}{\color[rgb]{0,0,0}$\tilde S_i$}%
}}}}
\end{picture}%
	\caption{A polygon $\gamma$ and an edge of its Darboux transform $\eta$.}
	\label{fig:darboux}
\end{figure}
The difference vector $T_{i+1}$ is obtained from $T_i$ by a rotation with axis 
$lT_i-S_i$. The quadrilateral $\gamma_i\gamma_{i+1}\eta_{i+1}\eta_i$ is 
therefore a ``folded parallelogram''. In particular, corresponding edges of
$\gamma$ and $\eta$ have the same length. The
angle of rotation is $2\arctan({\|lT_i-S_i\|}/{r})$. For $r=0$ it is $\pi$. For 
$r\rightarrow\pm\infty$, it goes to zero and in the limit the Darboux
transformation becomes a translation. 

Equation~(\ref{d2}) can be written in the form 
\begin{equation}
T_{i+1} = (aT_i+b)(cT_i+d)^{-1},
\end{equation}
where $a,b,c,d\in\H$ depend on $S_i$ and the parameters $l,r$. That is, for 
each $i \in \Z$, $T_{i+1}$ is obtained by applying a quaternionic 
fractional linear transformation $f_i:\bar\H\rightarrow\bar\H$ to $T_i$, where 
$\bar\H=\H\cup\{\infty\}$. Indeed, 
(\ref{d2}) is equivalent to
\begin{equation}
\label{d6}
T_{i+1} = \big(l T_i - r - S_i\big) \big((r + S_i) T_i + l\big)^{-1}.
\end{equation}
To see this note that $T_i^{-1} = -T_i$ because $T_i$ is a purely 
imaginary unit quaternion, and hence $T_i (-r + l T_i - S_i)^{-1}=(rT_i+l+S_iT_i)$.

It is convenient to rewrite fractional linear transformations as matrix 
multiplication. Just as the extended complex plane $\bar\C=\C\cup\{\infty\}$ 
can be identified with the Riemann sphere $S^2$ and with the complex projective
line $\C\mathrm{P}^1$, $\bar\H\cong S^4\cong\H\mathrm{P}^1$. The quaternionic 
projective line $\H\mathrm{P}^1$ is the set of (quaternionic) 
$1$-dimensional subspaces of the vector space $\H^2$ over $\H$. We consider 
$\H^2$ as right vector space: the product of a vector $\big({ p\atop 
q}\big)\in\H^2$ and a scalar $\lambda\in\H$ is $\big({p\atop q}\big)\lambda = 
({p\lambda \atop q\lambda})$. A point 
\begin{equation*}
	\Bigg[{p\atop q}\Bigg]=\Bigg({p\atop q}\Bigg)\H\in\H\mathrm{P}^1
\end{equation*}
corresponds to the point $pq^{-1}\in\bar\H$, and $p,q$ are 
quaternionic homogeneous coordinates for this point. Now any fractional linear
transformation 
of $\bar\H$ can be written as quaternionic $2\times2$-matrix acting from the 
left on quaternionic homogeneous coordinates of $\H\mathrm{P}^1$: Writing $T_i$
in quaternionic homogeneous coordinates, 
\begin{equation*}
T_{i} = T^{(1)}_{i} (T^{(2)}_i)^{-1},
\end{equation*}
one obtains from (\ref{d6})
\begin{equation}
	\label{eq:Ui}
	\left(\begin{array}{c} 
    	T^{(1)}_{i+1} \\ 
    	T^{(2)}_{i+1}
	\end{array}\right) 
	= 
	U_i(l,r)
	\left(\begin{array}{c} 
      	T^{(1)}_i \\ 
      	T^{(2)}_i  
	\end{array}\right),\quad
	U_i(\lambda,\rho) := 
	\left(
	\begin{array}{cc} 
    	\lambda & -\rho -S_i \\ 
        \rho+S_i & \lambda 
    \end{array}\right).
\end{equation}

The following Theorem~\ref{thm:lax} characterises the Darboux transformations of polygons via a Lax
pair of quaternionic $2\times 2$-matrices with spectral parameter.
Theorem~\ref{thm:permutability} is a permutability theorem for these Darboux
transformations. 

\begin{thm}[Lax pair]
	\label{thm:lax}
	Let $S_i=\gamma_{i+1}-\gamma_i$, $|T_i|=1$, and let $U_i(\lambda,\rho)$ be defined
	by~(\ref{eq:Ui}) and
	\begin{equation*}
    	\tilde{U}_i(\lambda,\rho)=
    	\left(
    	\begin{array}{cc} 
        	\lambda & -\rho -\tilde S_i \\ \rho+\tilde S_i & \lambda 
        \end{array}
		\right),
	\end{equation*}%
	\begin{equation*}
		V_i(\lambda,\rho)=
		\left(
    	\begin{array}{cc}
      		\lambda &  - \rho + r - l T_i\\
      		\rho - r + l T_i & \lambda
        \end{array}
		\right).
    \end{equation*}
	Then 
	\begin{equation}
    	\label{eq:UV_compat}
    	V_{i+1}(\lambda,\rho)U_i(\lambda,\rho)
    	=\tilde U_i(\lambda,\rho)V_i(\lambda,\rho)
    \end{equation}
	for all $\lambda,\rho\in\R$, if and only if $S$ and $T$ satisfy~(\ref{d2}) and	
	\begin{equation}
    	\label{eq:STclosure}
      	l T_{i+1} + S_i=\tilde S_i + l T_i.
    \end{equation}
	That is, if and only if $\eta=\gamma+lT$ is a Darboux transform of $\gamma$
	with twist parameter $r$ and distance $l$, and $\tilde S_i=\eta_{i+1}-\eta_i$.
\end{thm}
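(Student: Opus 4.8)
My plan is to prove the equivalence by expanding both sides of~(\ref{eq:UV_compat}) as polynomials in the central real parameters $\lambda,\rho$ and matching coefficients. The structural observation that makes this painless is that $U_i$, $\tilde U_i$ and $V_i$ all have the common shape $\bigl(\begin{smallmatrix}\lambda & Z\\ -Z & \lambda\end{smallmatrix}\bigr)$, with $Z$ equal to $-\rho-S_i$, to $-\rho-\tilde S_i$, and to $-\rho+r-lT_i$ respectively. A one-line computation — using only that $\lambda,\rho\in\R$ are central — shows that the product of two matrices of this shape is again of this shape, namely
\[\Bigl(\begin{smallmatrix}\lambda & Z\\ -Z & \lambda\end{smallmatrix}\Bigr)\Bigl(\begin{smallmatrix}\lambda & Z'\\ -Z' & \lambda\end{smallmatrix}\Bigr)=\Bigl(\begin{smallmatrix}\lambda^2-ZZ' & \lambda(Z+Z')\\ -\lambda(Z+Z') & \lambda^2-ZZ'\end{smallmatrix}\Bigr).\]
Hence both sides of~(\ref{eq:UV_compat}) again have this shape, so the matrix identity is equivalent to just two quaternionic equations: equality of the $(1,2)$-entries and equality of the $(1,1)$-entries.

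I would then simply read off these two equations. Equating the $(1,2)$-entries and cancelling the overall factor $\lambda$ (permissible because the identity must hold for all real $\lambda$), the terms in $r$ and $\rho$ drop out and exactly the closure relation~(\ref{eq:STclosure}), $lT_{i+1}+S_i=\tilde S_i+lT_i$, survives. Equating the $(1,1)$-entries gives $(-\rho+r-lT_{i+1})(-\rho-S_i)=(-\rho-\tilde S_i)(-\rho+r-lT_i)$, which must hold for all real $\rho$. Expanding both sides as quadratics in $\rho$ (again $\rho$ is central): the $\rho^2$-coefficients agree trivially, the $\rho^1$-coefficients re-derive~(\ref{eq:STclosure}), and the $\rho^0$-coefficients yield the single further relation $(r-lT_{i+1})S_i=\tilde S_i(r-lT_i)$.

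The remaining — and slightly more delicate — step is to show that this $\rho^0$-relation, together with~(\ref{eq:STclosure}), is equivalent to the Darboux equation~(\ref{d2}). Here I would use~(\ref{eq:STclosure}) to replace $\tilde S_i$ by $lT_{i+1}+S_i-lT_i$, multiply out, and invoke the fact that $T_i$ is a purely imaginary unit quaternion, so that $T_i^2=-1$. After cancelling the common term $rS_i$ and dividing by $-l$ (allowed since $l>0$), the relation should collapse to
\[T_{i+1}\bigl(-r+lT_i-S_i\bigr)=\bigl(-r+lT_i-S_i\bigr)T_i,\]
which is precisely~(\ref{d2}) in cross-multiplied form (and hence equivalent to~(\ref{d6})). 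Since every step in this chain is an equivalence, this establishes the stated characterisation. The geometric restatement is then immediate: $|T_i|=1$ together with $\eta_i-\gamma_i=lT_i$ gives $\|\eta_i-\gamma_i\|=l$; (\ref{d2}) is by definition the Darboux condition with twist parameter $r$ and distance $l$; and~(\ref{eq:STclosure}) says exactly that $\tilde S_i=(\gamma_{i+1}+lT_{i+1})-(\gamma_i+lT_i)=\eta_{i+1}-\eta_i$.

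I do not foresee a genuine obstacle: the entire argument is a direct, if noncommutative, computation. The one thing to be vigilant about is the order of all quaternionic products — nothing may be silently commuted past anything except the real parameters $\lambda,\rho,r,l$ — together with the observation that it is precisely the identity $T_i^2=-1$ that makes the final simplification reduce to~(\ref{d2}) rather than to a messier relation.
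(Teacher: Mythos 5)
Your proposal is correct and follows essentially the same route as the paper's proof: both start from the observation that matrices of the shape $\bigl(\begin{smallmatrix}\lambda & Z\\ -Z & \lambda\end{smallmatrix}\bigr)$ multiply to something of the same shape, reducing~(\ref{eq:UV_compat}) to two quaternionic equations, which are then matched coefficient by coefficient in the central parameters. The only (inessential) difference in ordering is that the paper first uses~(\ref{eq:STclosure}) to eliminate $\tilde S_i$ from the $(1,1)$-entry equation and then notes the $\rho$- and $\rho^2$-coefficients match automatically, whereas you expand in $\rho$ first and substitute afterwards; the final algebraic manipulation using $T_i^2=-1$ to recover~(\ref{d2}) is the same in both.
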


\noindent
Of course (\ref{eq:UV_compat}) means that the following diagram
commutes: 
\begin{equation*}
	\begin{array}{ccc}
    	\H^2 & \stackrel{\mbox{$\tilde U_i$}}{\longrightarrow}  & \H^2 \\
    	\makebox[0pt][r]{$V_i$}\Big\uparrow & 
    	& \Big\uparrow\makebox[0pt][l]{$V_{i+1}$}\\ 
    	\H^2 & \stackrel{\mbox{$U_i$}}{\longrightarrow}  & \H^2 
	\end{array}
\end{equation*}

\begin{proof}
Note that in general for quaternionic $2 \times 2$-matrices with
$a_+, a, b, \tilde b \in \H$
and $\lambda \in \R$ the equality
\begin{equation*}
\left(\begin{array}{cc} \lambda & a_+ \\ -a_+ & \lambda \end{array} \right)
\left(\begin{array}{cc} \lambda & b \\ -b & \lambda \end{array} \right)
=
\left(\begin{array}{cc} \lambda & \tilde{b} \\ -\tilde{b} & \lambda \end{array} \right)
\left(\begin{array}{cc} \lambda & a \\ -a & \lambda \end{array} \right)
\end{equation*}
is equivalent to
\begin{equation*}
a_+ b = \tilde{b} a
\qquad \textrm{and} \qquad
\lambda (a_+ + b) = \lambda (\tilde b + a).
\end{equation*}
It follows that~(\ref{eq:UV_compat}) holds for all $\lambda\in\R$, if and only
if~(\ref{eq:STclosure}) holds and
\begin{equation*}
(-\rho+r-lT_{i+1})(-\rho-S_i)=(-\rho-\tilde S_i)(-\rho+r-lT_{i}).
\end{equation*}
Use~(\ref{eq:STclosure}) to eliminate $\tilde S_i$ from this equation and
gather terms of equal power in $\rho$ on both sides. The coefficients of
$\rho^2$ are both 1, and the coefficients of $\rho$ are obviously equal. What
remains is the equation
\begin{equation*}
(r-lT_{i+1})(-S_i)=(-S_i-lT_{i+1}+lT_i)(r-lT_i).
\end{equation*}
Solve for $T_{i+1}$ to obtain~(\ref{d2}).
\end{proof}

\begin{thm}[Permutability]
	\label{thm:permutability}
	Suppose $\eta=\gamma+lT$ is a Darboux transform of $\gamma$ with twist
	parameter~$r$ and distance~$l$, and $\hat\eta=\gamma+\lambda\hat T$ is a
	Darboux transform of $\gamma$ with twist parameter~$\rho$ and distance
	$\lambda$, then $\eta + \lambda\tilde T$ with
	\begin{equation}  
	    \label{eq:tilde_T}
 		\tilde T = \big(\lambda \hat{T} - \rho + r - l T\big)
 			\big((\rho - r + l T)\hat T + \lambda\big)^{-1}
	\end{equation}	
	is a Darboux transformation of $\eta$ with twist parameter $\rho$ and distance
	$\lambda$.
\end{thm}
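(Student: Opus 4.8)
The plan is to exploit the Lax-pair characterisation of Theorem~\ref{thm:lax}. The key observation is that the transition matrices $U_i(\lambda,\rho)$ and $V_i(\lambda,\rho)$ encode \emph{all} the data: $U_i$ depends only on the edge $S_i$ of the base polygon, while $V_i$ encodes the Darboux transform via the normalised difference vector $T_i$. The compatibility equation~(\ref{eq:UV_compat}), read for \emph{two} different values of the spectral parameter, will give us exactly the three commuting squares we need.

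First I would set up notation carefully. Let $\gamma$ have edges $S_i$, let $\eta=\gamma+lT$ be the Darboux transform with parameters $(l,r)$, and let $\hat\eta=\gamma+\lambda\hat T$ be the Darboux transform with parameters $(\lambda,\rho)$. Write $U_i=U_i(\cdot,\cdot)$ for the base matrix built from $S_i$, write $V_i$ for the matrix built from $T_i$ (parameters $l,r$), and write $\hat V_i$ for the matrix built from $\hat T_i$ (parameters $\lambda,\rho$). By Theorem~\ref{thm:lax} applied to the first transform, $V_{i+1}(\mu,\nu)U_i(\mu,\nu)=\tilde U_i(\mu,\nu)V_i(\mu,\nu)$ holds identically in $(\mu,\nu)$, where $\tilde U_i$ is built from $\tilde S_i=\eta_{i+1}-\eta_i$; and by Theorem~\ref{thm:lax} applied to the second transform, $\hat V_{i+1}(\mu,\nu)U_i(\mu,\nu)=\hat U_i(\mu,\nu)\hat V_i(\mu,\nu)$, where $\hat U_i$ is built from $\hat S_i=\hat\eta_{i+1}-\hat\eta_i$. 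Now I want to produce a polygon $\zeta$ that is simultaneously a Darboux transform of $\eta$ with parameters $(\lambda,\rho)$ and of $\hat\eta$ with parameters $(l,r)$; its associated matrix $W_i$ should satisfy $W_{i+1}\tilde U_i=\hat{\tilde U}_i W_i$ on the one hand and a parallel identity over the $\hat\eta$-side on the other. The natural candidate for $W_i$, the matrix that makes the three-dimensional cube of transition matrices commute, is built from the vector $\tilde T$ given by formula~(\ref{eq:tilde_T}); indeed~(\ref{eq:tilde_T}) is precisely Equation~(\ref{d6}) with $T_i\rightsquigarrow\hat T_i$, $S_i\rightsquigarrow lT_i$, $l\rightsquigarrow\lambda$, $r\rightsquigarrow\rho$ — i.e.\ the Darboux condition~(\ref{d2}) with base-edge $lT_i$ and parameters $(\lambda,\rho)$, which is exactly what "Darboux transform of $\eta$ with parameters $(\lambda,\rho)$" demands once we check that the quadrilateral closes up.

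The core computation is then the following. Using the formula for $V_i$ evaluated at the spectral value $(\lambda,\rho)$: note that $V_i(\lambda,\rho)$ has the block form $\left(\begin{smallmatrix}\lambda & -\rho+r-lT_i\\ \rho-r+lT_i & \lambda\end{smallmatrix}\right)$, which is \emph{precisely} $U_i^{\,\mathrm{new}}(\lambda,\rho)$ for a "base polygon" whose $i$-th edge is $-(r-lT_i)= lT_i-r$. In other words, after the substitution $S_i\mapsto lT_i-r$ the two roles of $U$ and $V$ are interchanged. So I would evaluate the first compatibility identity at $(\mu,\nu)=(\lambda,\rho)$ and the second at $(\mu,\nu)=(l,r)$, obtaining
\begin{equation*}
V_{i+1}(\lambda,\rho)\,U_i(\lambda,\rho)=\tilde U_i(\lambda,\rho)\,V_i(\lambda,\rho),\qquad
\hat V_{i+1}(l,r)\,U_i(l,r)=\hat U_i(l,r)\,\hat V_i(l,r).
\end{equation*}
From the first identity, $V_i(\lambda,\rho)$ intertwines $U_i(\lambda,\rho)$ with $\tilde U_i(\lambda,\rho)$; this says exactly that $\hat T_i$ (whose matrix at parameters $(\lambda,\rho)$ is $U_i(\lambda,\rho)$ up to the edge substitution) gets carried by $V_i(\lambda,\rho)$ to a new vector, and that new vector is — after reading off $W_i$ from $V_{i+1}(\lambda,\rho)\hat V_i(\lambda,\rho)$ — the fractional-linear image~(\ref{eq:tilde_T}). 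Concretely I would define $\tilde T_i$ by~(\ref{eq:tilde_T}) and verify: (i) $|\tilde T_i|=1$, because a quaternionic fractional linear transformation of the form appearing in~(\ref{d6}) preserves the unit sphere $\mathrm{Im}\,\H\cap S^3$ — this is the same sphere-preservation used right after~(\ref{d6}) to derive it from~(\ref{d2}); (ii) $\tilde T_i$ satisfies the recursion~(\ref{d2}) appropriate to base-polygon $\eta$ and parameters $(\lambda,\rho)$, i.e.\ $\tilde T_{i+1}=(-\rho+\lambda\tilde T_i-\tilde S_i)\tilde T_i(-\rho+\lambda\tilde T_i-\tilde S_i)^{-1}$; and (iii) the quadrilateral $\eta_i\,\eta_{i+1}\,(\eta_{i+1}+\lambda\tilde T_{i+1})\,(\eta_i+\lambda\tilde T_i)$ closes, equivalently the closure relation $\lambda\tilde T_{i+1}+\tilde S_i=\hat{\tilde S}_i+\lambda\tilde T_i$ holds with $\hat{\tilde S}_i$ the corresponding edge of $\hat\eta+\lambda\tilde T=\eta+\lambda\tilde T$ — but this last closure is automatic from the cube: the two faces $\gamma\to\eta$, $\gamma\to\hat\eta$ plus the edge-length-preservation built into each Darboux step force the fourth face to close. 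Step (ii) is where the real work lies, and I expect it to go through by substituting~(\ref{eq:STclosure}) (the closure relation for the first transform, $lT_{i+1}+S_i=\tilde S_i+lT_i$) into~(\ref{eq:tilde_T}) written at index $i+1$, clearing the inverse, and comparing with what the recursion demands — exactly the "gather terms of equal power" manipulation that appeared in the proof of Theorem~\ref{thm:lax}, now run with the $U\leftrightarrow V$ roles swapped.

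The main obstacle will be bookkeeping rather than conceptual: one must keep straight \emph{four} polygons ($\gamma,\eta,\hat\eta,\tilde\eta=\eta+\lambda\tilde T$), their edge vectors, and the two pairs of parameters, and one must be careful that quaternionic multiplication is noncommutative so that "gathering terms" means left/right multiplication in a fixed order and never dividing. A clean way to avoid most of the mess — and the route I would actually take in the write-up — is to not expand~(\ref{eq:tilde_T}) at all, but instead define $W_i$ as the $2\times2$ quaternionic matrix with the $V$-block structure built from $\tilde T_i$, and simply \emph{compute} $W_{i+1}\tilde U_i(\lambda,\rho)$ and $\hat{\tilde U}_i W_i(\lambda,\rho)$ and check they agree using the two already-established Lax identities — i.e.\ the permutability of the Darboux transformations is the commutativity of the cube of transition matrices, which is a formal consequence of the commutativity of its two given faces together with the observation that $V_i$ at parameters $(\lambda,\rho)$ has the same block shape as a $U$-matrix. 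The closure relation~(\ref{eq:STclosure}) for the new transform, hence the geometric statement "$\eta+\lambda\tilde T$ is a Darboux transform of $\eta$", then drops out of the $\rho$-linear (respectively $\lambda$-linear) part of that matrix identity exactly as in the proof of Theorem~\ref{thm:lax}.
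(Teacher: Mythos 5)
Your proposal is correct and is essentially the paper's proof: the heart of both is that $\tilde T_i=V_i(\lambda,\rho)\hat T_i$ as a fractional linear transformation, that $\hat T_{i+1}=U_i(\lambda,\rho)\hat T_i$ (from the $(\lambda,\rho)$-Darboux datum), and that the Lax identity $V_{i+1}(\lambda,\rho)U_i(\lambda,\rho)=\tilde U_i(\lambda,\rho)V_i(\lambda,\rho)$ from Theorem~\ref{thm:lax} (applied to the $(l,r)$-Darboux transform, evaluated at spectral value $(\lambda,\rho)$) then yields $\tilde T_{i+1}=\tilde U_i(\lambda,\rho)\tilde T_i$, which by Equation~(\ref{eq:Ui}) is precisely the Darboux recursion~(\ref{d2}) for $\eta$. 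The paper's write-up is leaner than yours because it stops at the single commuting square: your step (i) ($|\tilde T_i|=1$) is automatic since~(\ref{eq:tilde_T}) is a conjugation of $\hat T_i$, and your step (iii) (closure of the quadrilateral) is not a condition to check but the definition of $\hat{\tilde S}_i$, while computing the full matrix identity $W_{i+1}\tilde U_i=\hat{\tilde U}_iW_i$ (the ``cube'') is more than the definition of Darboux transform requires. Two small slips worth flagging: the substitution that turns~(\ref{d6}) into~(\ref{eq:tilde_T}) is $S_i\rightsquigarrow lT_i-r$ rather than $S_i\rightsquigarrow lT_i$ (you state it correctly a few lines later), and the equality ``$\hat\eta+\lambda\tilde T=\eta+\lambda\tilde T$'' is a typo for the Bianchi-diamond relation $\eta+\lambda\tilde T=\hat\eta+l\check T$, which in any case the theorem does not assert.
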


\begin{proof}
	Note that $\tilde T_i$ is obtained by applying the quaternionic fractional
	linear transformation represented by the matrix $V_i(\lambda,\rho)$ to $\hat
	T_i$. Let us write $\tilde T_i=V_i(\lambda,\rho)\hat T_i$
	for short. Since $\hat\eta$ is a Darboux transform of $\gamma$ with twist
	parameter $\rho$ and distance $\lambda$, Equation~(\ref{eq:Ui}) says that 
	$\hat T_{i+1}=U_i(\lambda,\rho)\hat T_i$. 
	Now Theorem~\ref{thm:lax} implies 
	$\tilde T_{i+1}=\tilde U_i(\lambda,\rho)\tilde T_i$ and hence (again by
	Equation~(\ref{eq:Ui})), $\eta + \lambda\tilde T$ is a  Darboux transformation 
	of $\eta$ with twist parameter $\rho$ and distance $\lambda$. 
\end{proof}

Even if $\gamma$ is a closed curve, the curves obtained by
iterating~(\ref{d2}) will in general not close up. However, we will
see that any closed curve has generically two closed Darboux transforms.

The fractional linear transformations $f_i:T_i\mapsto T_{i+1}$ that are
represented by the matrices $U_i(l,r)$ have the special property that they
map the unit sphere $S^2 =\{q\in\mathrm{Im}\,\H\,|\,q^2=-1\}$ to
itself. This follows directly 
from~(\ref{d2}). Hence the restrictions $f_i|_{S^2}$ are
M\"obius transformations of $S^2$. 
In fact, they are orientation preserving M\"obius transformations: By
continuity, it is enough to check this for a particular value of $r$
and~$l$; and for $r=0$, $l=0$ one obtains $T_{i+1}=S_iT_iS_i^{-1}$, which
is a $180^\circ$ rotation with axis~$S_i$.

In order to find for given $l$, $r$ the 
closed Darboux transforms of $\gamma$, one has to look for choices of the 
initial unit vector $T_0$ such that the recursion (\ref{d6}) generates a 
sequence with period $n$, i.e.~$T_0 = T_n$. The composition 
$f_{n-1}\circ\ldots\circ f_0$, which maps $T_0 \mapsto T_n$, is represented by
the monodromy matrix
\begin{equation*}
	H_{l,r}=U_{n-1}(l,r)\cdots U_{2}(l,r) U_{1}(l,r) U_{0}(l,r).
\end{equation*}
It is is itself an 
orientation-preserving M\"obius transformation of the unit sphere $S^2$ onto 
itself. For special cases (we will see below that this cannot happen for all 
$l$, $r$) this M\"obius-transformation could be the identity, but in general it 
will have exactly two fixed points (counted with multiplicity).

With each closed curve $\gamma$ we have thus associated a monodromy map 
$f_{n-1}\circ\ldots\circ f_0$. 
$T_0$ will be a fixed point 
if and only if 
$\left(T_0 \atop 1 \right)$ is an eigenvector of the monodromy matrix~$H_{l,r}$.
The following theorem is an immediate consequence of Theorem~\ref{thm:lax}.

\begin{thm}
\label{dt1}
Suppose $\eta = \gamma + l T$ is a closed Darboux transform of
$\gamma$ with distance $l$ and twist parameter $r$.  Then for all
$\lambda$ and $\rho$, the monodromy matrix~$H^{\eta}_{\lambda, \rho}$
of $\eta$ is conjugate to the monodromy matrix $H_{\lambda,
  \rho}$ of $\gamma$:
\begin{equation}
  \label{d10}
  H^{\eta}_{\lambda, \rho} = 
  V_{0}(\lambda,\rho) H_{\lambda, \rho} V_{0}(\lambda,\rho)^{-1}.
\end{equation}
This means that if $\left(\hat{T}_0 \atop 1 \right)$ is
an eigenvector of $H_{\lambda, \rho}$, then
$V_{0}(\lambda,\rho)\left(\hat{T}_0 \atop 1 \right)$
is an eigenvector of~$H^{\eta}_{\lambda, \rho}$.
\end{thm}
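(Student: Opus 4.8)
The plan is to deduce Theorem~\ref{dt1} directly from the compatibility relation~(\ref{eq:UV_compat}) of Theorem~\ref{thm:lax} by iterating it around the closed polygon. First I would recall that since $\eta$ is a closed Darboux transform of $\gamma$ with distance $l$ and twist parameter $r$, the hypotheses of Theorem~\ref{thm:lax} are satisfied: for every $i$ we have $V_{i+1}(\lambda,\rho)U_i(\lambda,\rho)=\tilde U_i(\lambda,\rho)V_i(\lambda,\rho)$, where $\tilde U_i$ is built from $\tilde S_i=\eta_{i+1}-\eta_i$, i.e. $\tilde U_i = U^{\eta}_i$. Rewriting this as $\tilde U_i V_i = V_{i+1} U_i$, and hence $\tilde U_i = V_{i+1} U_i V_i^{-1}$, the plan is to telescope the product defining $H^{\eta}_{\lambda,\rho}=\tilde U_{n-1}\cdots\tilde U_0$.

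The core computation is the telescoping: writing out
\begin{equation*}
H^{\eta}_{\lambda,\rho}
= \tilde U_{n-1}\cdots \tilde U_0
= \big(V_n U_{n-1} V_{n-1}^{-1}\big)\big(V_{n-1}U_{n-2}V_{n-2}^{-1}\big)\cdots\big(V_1 U_0 V_0^{-1}\big),
\end{equation*}
the interior factors $V_{n-1}^{-1}V_{n-1},\ldots,V_1^{-1}V_1$ cancel in pairs, leaving
\begin{equation*}
H^{\eta}_{\lambda,\rho}= V_n U_{n-1}U_{n-2}\cdots U_0 V_0^{-1} = V_n\, H_{\lambda,\rho}\, V_0^{-1}.
\end{equation*}
The one point that needs care — and the only real obstacle — is verifying that $V_n=V_0$, so that this becomes an honest conjugation. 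This is exactly the periodicity statement: $V_i$ is determined by $T_i$ via the formula in the statement of Theorem~\ref{thm:lax}, and since $\gamma$ is closed (period $n$) and $\eta=\gamma+lT$ is a \emph{closed} Darboux transform, $T$ is also periodic with period $n$, so $T_n=T_0$ and therefore $V_n(\lambda,\rho)=V_0(\lambda,\rho)$. Substituting $V_n=V_0$ gives~(\ref{d10}).

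The eigenvector assertion is then immediate: if $H_{\lambda,\rho}\left(\hat T_0\atop 1\right)=\left(\hat T_0\atop 1\right)\mu$ for some $\mu\in\H$ (a right eigenvector, matching the right-vector-space convention used for $\H^2$ and the earlier identification of fixed points of the monodromy M\"obius map with eigenlines of $H_{l,r}$), then applying $V_0(\lambda,\rho)$ on the left and using~(\ref{d10}) yields
\begin{equation*}
H^{\eta}_{\lambda,\rho}\Big(V_0(\lambda,\rho)\big({\textstyle{\hat T_0\atop 1}}\big)\Big)
= V_0(\lambda,\rho) H_{\lambda,\rho}\big({\textstyle{\hat T_0\atop 1}}\big)
= \Big(V_0(\lambda,\rho)\big({\textstyle{\hat T_0\atop 1}}\big)\Big)\mu,
\end{equation*}
so $V_0(\lambda,\rho)\left(\hat T_0\atop 1\right)$ spans an eigenline of $H^{\eta}_{\lambda,\rho}$, as claimed. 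I would remark that the statement holds for \emph{all} $\lambda,\rho$, not merely the pair $(l,r)$ used to build $\eta$, precisely because the compatibility~(\ref{eq:UV_compat}) is an identity in the spectral parameters $\lambda,\rho$ while the fixed distance $l$ and twist $r$ of the Darboux transform enter only through the matrices $V_i$, which do not depend on $\lambda,\rho$ beyond the scalar $\lambda$ on the diagonal and $\rho$ in the off-diagonal entries — the point being that $V_i$ is the same matrix on both sides of the telescoping regardless of which $\lambda,\rho$ we evaluate $U_i,\tilde U_i$ at.
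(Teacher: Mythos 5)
Your proof is correct and makes explicit precisely what the paper declares to be an ``immediate consequence of Theorem~\ref{thm:lax}'': you telescope the compatibility relations $\tilde U_i = V_{i+1} U_i V_i^{-1}$ around the closed polygon and correctly identify that the periodicity $V_n(\lambda,\rho) = V_0(\lambda,\rho)$ (which holds because $\eta$ is a \emph{closed} Darboux transform, so $T_n = T_0$) is what turns the telescoped product into an honest conjugation. This is the same route the paper intends, just spelled out.
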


Moreover, one can compute all closed Darboux transforms of $\eta$ without having
to solve an eigenvalue problem, even without iterating the $f_i$.
Indeed, by Theorem~\ref{thm:permutability},
all closed Darboux transforms of $\eta$ are given by (\ref{eq:tilde_T}).

Theorem~\ref{dt1} implies that apart from the edge lengths there are
many other quantities connected with closed polygons that are
invariant under Darboux transforms: For each $\lambda$, $\rho$ the
conjugacy class of the monodromy matrix $H_{\lambda, \rho}$ is
invariant. We will show that this implies a nice geometric invariant: The area
vector of a closed polygon turns out to be invariant under Darboux
transformations (Theorem~\ref{areaInvariant}). 

To derive the invariance of the area vector from the invariance of the conjugacy
class of the monodromy
matrix, we equip the set of quaternionic $2\times2$-matrices of the form
\begin{equation} 
\label{d17}
\left( \begin{array}{cc} a & -b \\ b & a \end{array} \right), \ \ a, b \in \H
\end{equation}
with the structure of a $\C$-algebra that is isomorphic to $gl(2, \C)$. First
note that a quaternionic $2\times2$-matrix is of the form~(\ref{d17}) precisely if
it commutes with
\begin{equation*}
J = \left( \begin{array}{cc} 0 & -1 \\ 1 & 0 \end{array} \right).
\end{equation*}
Define the multiplication of such a matrix with a scalar $\lambda + i\rho\in\C$ by
\begin{equation}
	\label{eq:complex_multiplication}
(\lambda + i\rho) A = (\lambda I + \rho J) A,
\end{equation}
where $I$ is the identity matrix.

The complex multiples of the identity are then
\begin{equation}
\label{d19}
Z =  (\lambda + i\rho)I = \lambda I + \rho J
= 
\left(
\begin{array}{cc}
	\lambda & -\rho \\ \rho & \lambda
\end{array}
\right).
\end{equation}
Thus we can write $U_i(\lambda,\rho)$ and $V_i(\lambda,\rho)$ as
\begin{equation*}
	U_i(\lambda,\rho)=(\lambda + i \rho) I + J
	\left(
	\begin{array}{cc}
    	S_i & 0 \\ 0 & S_i
    \end{array}
	\right),
\end{equation*}
\begin{equation*}
	V_i(\lambda,\rho)=(\lambda + i \rho) I + J
	\left(
	\begin{array}{cc}
    	-r +l T_i & 0 \\ 0 & -r + l T_i
    \end{array}
	\right).
\end{equation*}

\begin{remark}
This means we can combine $\lambda$ and $\rho$ into one complex spectral
parameter $\lambda + i \rho$.
\end{remark}

Equation~(\ref{d19}) also implies that the trace-free complex matrices in $gl(2,
\C)$ correspond to those
matrices of the form~(\ref{d17}) with $a, b \in \mathrm{Im}\H$. Further, a
matrix of the form~(\ref{d17}) has $a, b \in \mathrm{Im}\H$ precisely if its
square is a matrix of the form~(\ref{d19}), that is, a (complex) multiple
(with multiplication defied by~(\ref{eq:complex_multiplication})) of the
identity. Identifying $\C$ with the matrices of the form (\ref{d19}) we obtain
\begin{equation*}
\frac{1}{2} \tr_{\C} \left( \begin{array}{cc} a & -b \\ b & a \end{array} \right)
= \mathrm{Re}\,a + (\mathrm{Re}\,b) J
\end{equation*}
and
\begin{equation*}
\det_{\C}\nolimits \left( \begin{array}{cc} a & -b \\ b & a \end{array} \right)
= \frac{1}{2}((\tr A)^2 - \tr A^2) = |a^2| -|b^2| + 2 \pair{a,b} J.
\end{equation*}
In particular
\begin{equation*}
\det_{\C}\nolimits \left( \begin{array}{cc} l & -r-S \\ r+S & l \end{array} \right)
= l^2 - r^2 -|S|^2 + 2 l r J,
\end{equation*}
which vanishes precisely when $r=0, l=\pm |S|$. Using the notation
\begin{equation*}
\mathit{diag}(S) := \left( \begin{array}{cc} S & 0 \\ 0 & S \end{array} \right)
\end{equation*}
for $S \in \H$ we can express $H_{\lambda, \rho}$ as
\begin{equation*}
H_Z = (Z + J \mathit{diag}(S_{n-1})) \cdots (Z + J\mathit{diag}(S_0)),
\end{equation*}
with $Z$ given by (\ref{d19}).
Hence $\det_{\C}\nolimits H_Z$ is a
complex polynomial of degree $2n$ with zeroes precisely
at $Z=\pm |S_0|$, ..., $\pm |S_{n-1}|$. By Theorem~\ref{dt1} this determinant is invariant
under Darboux transforms. This just corresponds to the fact that the edge lengths
are invariant by construction. Non-trivial further invariants come from the complex
polynomial
\begin{equation*}
P(Z) = \tr_{\C} H_Z
\end{equation*}
of degree $n$. Let us look at the polynomial coefficients of $H_Z$ itself:
\begin{equation*}
    H_Z = \sum_{k=1}^n Z^k A_{n-k},
\end{equation*}
where
\begin{equation*}
    A_k = J^k \sum_{n-1 \ge j_1, ..., j_k \ge 0} \mathit{diag}({S}_{j_1} \cdots
    {S}_{j_k}).
\end{equation*}
In particular,
\begin{eqnarray*}
A_0 & = I, \\
A_1 & = J^k \sum_{k=0}^{n-1} \mathit{diag}({S}_k) = 0,\\
A_2 &= - \sum_{n-1 \ge i > j \ge 0} \mathit{diag}({S}_i {S}_j).
\end{eqnarray*}
That is, $A_2$ is a diagonal matrix with both diagonal entries equal to
\begin{equation*}
q = - \sum_{n-1 \ge i > j \ge 0} S_i S_j.
\end{equation*}
The real part of $q$ is
\begin{eqnarray*}
  \mathrm{Re}(q) & = \sum_{n-1 \ge i > j \ge 0} \pair{S_i, S_j}
           = \frac{1}{2} \sum_{i \neq j} \pair{S_i, S_j}
           = \frac{1}{2} |\sum_{i=0}^{n-1}S_i|^2  
              - \frac{1}{2}\sum_{i=0}^{n-1}|S_i|^2 \\
         & = - \frac{1}{2}\sum_{i=0}^{n-1}|S_i|^2.
\end{eqnarray*}
This is a function of the edge lengths and therefore not interesting.
The imaginary part of $q$ is given by
\begin{eqnarray*}
  2 A := \mathop{\mathrm{Im}}(q) & = - \sum_{i > j} S_i \times S_j \\
  & =  \sum_{j=1}^{n-1} \left(\sum_{i=1}^{n-1}S_i\right) \times S_j \\
  & =  \sum_{j=1}^{n-1} (\gamma_j - \gamma_0) \times (\gamma_{j+1} - \gamma_j)\\
  & =  \sum_{j=1}^{n-1} (\gamma_j - \gamma_0) \times (\gamma_{j+1} -\gamma_0).
\end{eqnarray*}
This invariant $A$ is just the area vector. The following proposition (with
obvious proof) clarifies its geometrical meaning.

\begin{prop} 
  Let $a \in \R^3$ be a unit vector, $|a| = 1$, and endow the plane
  $a^\perp$ with the volume form
  \begin{equation*}
    \det_{a^\perp}\nolimits (X,Y) := \det_{\R^3}\nolimits (a, X, Y).
  \end{equation*}
  Then the area enclosed by the orthogonal projection $\hat{\gamma}$ of
  the polygon $\gamma$
  \begin{equation*}
    \hat{\gamma}_n = \gamma_n - \pair{\gamma_n, a} a
  \end{equation*}
  is equal to $\pair{M, a}$. 
\end{prop}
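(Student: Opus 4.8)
The plan is to compute the enclosed area of the projected polygon straight from the shoelace formula and then to recognise the answer as $\pair{M,a}$, where $M=\tfrac12\sum_{j=1}^{n-1}(\gamma_j-\gamma_0)\times(\gamma_{j+1}-\gamma_0)$ is the area vector obtained above. Here ``area enclosed'' by the closed polygon $\hat\gamma$ --- which need not be convex or even embedded --- means the signed, winding-number-weighted area; this is the reading for which the identity holds verbatim and for which the proof is ``obvious''. Recall the two-dimensional shoelace formula: in a plane oriented by an area form $\omega$, a closed polygon $p_0,\dots,p_{n-1}$ (indices mod $n$) encloses the signed area $\tfrac12\sum_{j=0}^{n-1}\omega(p_j,p_{j+1})$. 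One gets this from Green's theorem, or elementarily by fixing an auxiliary point $O$, decomposing the enclosed region with multiplicity into the triangles $O\,p_j\,p_{j+1}$ of signed area $\tfrac12\,\omega(p_j-O,p_{j+1}-O)$, summing, and observing that the $O$-dependent contributions cancel because $\sum_j(p_{j+1}-p_j)=0$.

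Applying this with $p_j=\hat\gamma_j$ and $\omega=\det\nolimits_{a^\perp}$, and then unwinding the definition $\det\nolimits_{a^\perp}(X,Y)=\det\nolimits_{\R^3}(a,X,Y)$, gives
\[
  \mathrm{Area}(\hat\gamma)=\frac12\sum_{j=0}^{n-1}\det\nolimits_{\R^3}(a,\hat\gamma_j,\hat\gamma_{j+1}).
\]
Since $\hat\gamma_j=\gamma_j-\pair{\gamma_j,a}\,a$ differs from $\gamma_j$ only by a multiple of $a$, and adding a multiple of the first argument to another argument leaves $\det\nolimits_{\R^3}$ unchanged, each summand equals $\det\nolimits_{\R^3}(a,\gamma_j,\gamma_{j+1})=\pair{a,\gamma_j\times\gamma_{j+1}}$. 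Hence $\mathrm{Area}(\hat\gamma)=\pair{a,\ \tfrac12\sum_{j=0}^{n-1}\gamma_j\times\gamma_{j+1}}$.

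It remains to identify $\tfrac12\sum_{j=0}^{n-1}\gamma_j\times\gamma_{j+1}$ with $M$. Both expressions are invariant under a translation $\gamma\mapsto\gamma+v$: the formula for $M$ because it involves only differences, and the cyclic sum because its change is $(\sum_j\gamma_j)\times v+v\times(\sum_j\gamma_j)=0$. So we may assume $\gamma_0=0$; then $M=\tfrac12\sum_{j=1}^{n-1}\gamma_j\times\gamma_{j+1}$, while the terms $j=0$ and $j=n-1$ of $\tfrac12\sum_{j=0}^{n-1}\gamma_j\times\gamma_{j+1}$ vanish (they involve $\gamma_0=0$, respectively $\gamma_n=\gamma_0=0$), so the two sums coincide. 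Together with the previous paragraph this gives $\mathrm{Area}(\hat\gamma)=\pair{M,a}$. (If one prefers not to normalise, expand $(\gamma_j-\gamma_0)\times(\gamma_{j+1}-\gamma_0)$ and telescope the cross terms to reach the same identity.)

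I do not anticipate a genuine obstacle --- the calculation is short. The two points worth a word are the interpretation of ``enclosed area'' as signed area, which is exactly what makes the shoelace formula applicable, and the minor bookkeeping in the last step between the base-point form of $M$ and the base-point-free cyclic sum $\sum_j\gamma_j\times\gamma_{j+1}$, which translation invariance settles at once.
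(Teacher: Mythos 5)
Your proof is correct. The paper declares the proof ``obvious'' and supplies none, so there is no authorial argument to compare against; the shoelace computation you give is the standard way to fill in the claim, and every step checks out. Two small remarks. First, the proposition as printed contains a typo: it says $\pair{M,a}$ but never introduces $M$; from the surrounding text the intended quantity is the area vector $A$ with $2A=\sum_{j=1}^{n-1}(\gamma_j-\gamma_0)\times(\gamma_{j+1}-\gamma_0)$, and you have correctly read $M$ as that $A$. Second, your care about ``enclosed area'' meaning signed, winding-number-weighted area is exactly right and worth flagging: the polygon need not be embedded, and the identity holds only in this signed sense, which is also precisely the convention under which $\det\nolimits_{a^\perp}$ supplies the orientation. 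The reduction from $\hat\gamma_j$ back to $\gamma_j$ inside $\det\nolimits_{\R^3}(a,\cdot,\cdot)$ and the translation-invariance argument identifying the cyclic sum $\tfrac12\sum_j\gamma_j\times\gamma_{j+1}$ with the base-point form of $A$ are both clean; no gaps.
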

\noindent
This explains the name \emph{area vector}: It encodes all the
projected areas.
\begin{thm}
\label{areaInvariant}
The area vector $A$ is invariant under Darboux transforms.
\end{thm}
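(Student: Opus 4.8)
The plan is to deduce the invariance of the area vector from the conjugation formula of Theorem~\ref{dt1}, by reading off a suitable coefficient of the monodromy matrix regarded as a polynomial in the spectral parameter $Z$. Since both $\gamma$ and its closed Darboux transform $\eta=\gamma+lT$ are closed polygons, $\sum_i S_i=\sum_i\tilde S_i=0$; hence in the expansions of $H_Z$ and $H^\eta_Z$ in powers of $Z$ the two leading coefficients are forced to be $A_0=A_0^\eta=I$ and $A_1=A_1^\eta=0$. As shown above, $A_2=\mathit{diag}(q)$ with $q=-\sum_{i>j}S_iS_j$ and $2A=\mathrm{Im}(q)$, and likewise $A_2^\eta=\mathit{diag}(q^\eta)$ with $2A^\eta=\mathrm{Im}(q^\eta)$. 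It therefore suffices to prove $A_2^\eta=A_2$: this forces $q^\eta=q$ and hence $A^\eta=A$, the real part $\mathrm{Re}(q)=-\tfrac12\sum_i|S_i|^2$ merely re-encoding the (already obvious) invariance of the edge lengths.

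To get $A_2^\eta=A_2$, write the conjugating matrix of Theorem~\ref{dt1} as $V_0(Z)=Z+v_0$ with $v_0:=J\,\mathit{diag}(-r+lT_0)$ independent of $Z$. In the $\C$-algebra of matrices of the form~(\ref{d17}) the element $Z$ is a $\C$-multiple of the identity, hence commutes with $v_0$, so $V_0(Z)^{-1}=Z^{-1}-Z^{-2}v_0+Z^{-3}v_0^2-\cdots$ as an expansion in powers of $Z^{-1}$. Substituting this and $H_Z=Z^nI+Z^{n-2}A_2+O(Z^{n-3})$ into $H^\eta_Z=V_0(Z)\,H_Z\,V_0(Z)^{-1}$ and collecting powers of $Z$, one finds that the coefficient of $Z^n$ is $I$, that of $Z^{n-1}$ is $v_0-v_0=0$ (consistent with $A_1^\eta=0$), and that of $Z^{n-2}$ is $v_0^2-v_0^2+A_2=A_2$. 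Hence $A_2^\eta=A_2$ and the theorem follows. (The cancellation of the two $v_0^2$ terms is just the Cayley--Hamilton identity $v_0^2-(\tr_\C v_0)\,v_0+(\det_{\C} v_0)\,I=0$ in disguise.)

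The point that requires care --- the apparent obstacle --- is that the conjugating matrix $V_0(Z)$ itself depends on $Z$, so $A_2^\eta$ is \emph{not} simply a conjugate of $A_2$, and one cannot invoke conjugation-invariance of $\tr_\C H_Z$ directly (which in any case would only detect the scalar $\mathrm{Re}(q)$, not the vector $A$). What makes the argument work is precisely the combination of the two inputs above: $H_Z$ has vanishing $Z^{n-1}$-coefficient because $\gamma$ is closed, and the subleading terms of $V_0(Z)$ and of $V_0(Z)^{-1}$ cancel at order $Z^{n-2}$. As a cross-check, one can avoid the monodromy altogether: taking the imaginary part of~(\ref{d2}) yields $(T_i+T_{i+1})\times S_i=l\,T_{i+1}\times T_i-r\,(T_{i+1}-T_i)$, and substituting $\eta_i=\gamma_i+lT_i$ into $2A^\eta=\sum_i\eta_i\times\eta_{i+1}$ one checks, using $n$-periodicity and $\sum_i(T_{i+1}-T_i)=0$, that all $O(l)$ and $O(l^2)$ corrections telescope away --- but the monodromy computation is the one that fits the framework developed above.
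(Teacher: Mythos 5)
Your proof is correct and takes essentially the same route as the paper: both deduce $A_2^\eta=A_2$ from the conjugation relation of Theorem~\ref{dt1} by comparing coefficients in the $Z$-expansion of the monodromy. The only cosmetic difference is that you expand $V_0(Z)^{-1}$ as a power series in $Z^{-1}$, whereas the paper clears denominators and compares coefficients of $H^\eta_Z V_0 = V_0 H_Z$ directly (which, incidentally, amounts to matching the $Z^{n-1}$-coefficient there, not the $Z^{n-2}$-coefficient as the paper states).
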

\begin{proof}
By (\ref{d10}), the monodromy matrix of the Darboux transformed curve $\eta$,
\begin{equation*}
H^{\eta}_Z = \sum_{k=0}^n Z^k A^{\eta}_{n-k},
\end{equation*}
satisfies
\begin{equation}
\label{d36}
H^{\eta}_Z (Z + J(-r I + l\,\mathit{diag}{T}_0)) = (Z + J(-r I + l\,\mathit{diag}{T}_0)) H_Z
\end{equation}
Using
\begin{eqnarray*}
H_Z & = & Z^n + Z^{n-2} A_2 + ... + A_0, \\
H^{\eta}_Z & = & Z^n + Z^{n-2} A^{\eta}_2 + ... + A_0^{\eta}
\end{eqnarray*}
and comparing the $Z^{n-2}$-coefficients in both sides of (\ref{d36}) we obtain
\mbox{$A^{\eta}_2 = A_2$}.
\end{proof}

Finally we consider the continuum limit of smooth curves $\gamma: S^1
\rightarrow \R^3$ and indicate why Darboux transforms with small
parameters $l$, $r$ do indeed converge to the smoke ring
flow~(\ref{smokeringflow}). The continuum limit of (\ref{d2}) is
obtained by replacing $S$ by $hS$ and then computing $T' :=
\frac{d}{dh}\big|_{h=0} T_h$. The resulting differential equation is
\begin{equation*}
	T' = (TS-ST)(-r+lT)^{-1}
\end{equation*}
or
\begin{equation}
\label{d38}
 	T' =  \frac{2}{r^2 + l^2} T \times (l T \times S - r S),
\end{equation}
where $S: \R \rightarrow \mathbb{R}^3$ is given by
\begin{equation*}
\gamma' = S.
\end{equation*}
One can check that, as expected, the transformed curve $\eta = \gamma + l T$
satisfies

\begin{equation*}
|\eta'| = |\gamma'|.
\end{equation*}
The monodromy of the ODE (\ref{d38}) is a M\"obius transformation of $S^2$ that
generically has exactly two fixed points. Thus, for generic parameters
$l$ and $r$ a space curve $\gamma$ has exactly two closed Darboux transforms.

Assume now that we have for $r=-l$ a family of such closed Darboux
transforms $\eta_l$ that depend analytically on $l$. Then we
reparametrise $\eta_l$ as
\begin{equation}
\label{eta}
\gamma_l(s) := \eta_l(s-l) = \gamma(s-l)+l T_l(s-l).
\end{equation}
Then $\gamma_0=\gamma$ and comparing coefficients of $l$ in the power series
expansion of (\ref{eta}) we obtain
\begin{equation*}
\frac{\partial}{\partial l}\,\Big|_{l=0}\,\gamma_l = 0, \qquad 
\frac{\partial^2}{\partial l^2}\,\Big|_{l=0}\,\gamma_l = \gamma'\times\gamma''.
\end{equation*}
Hence
\begin{equation*}
\gamma_l-\gamma_0=l^2\gamma'\times\gamma'' + O(l^3).
\end{equation*}
A small time-step $\Delta t$ of the smoke ring flow is therefore approximated by
a Darboux transform with length $l$ given by $l^2=\Delta t$. 

\begin{remark}
In order to eliminate the reparametrising effect of the Darboux transforms
it is convenient to apply first a Darboux transform with 
parameters $l$ and $-r$ followed by a reverse Darboux transform with parameters
$ l$ and $r$. This will cancel out the (first order in $t$) tangential shift and
leave only the (second order in $t$) smoke ring evolution (see \cite{tim}).
\end{remark}

\section{An algorithm for the real time simulation of fluid flow}
\label{simulation}

Based on the theoretic foundations covered in the previous sections, we have 
implemented the following algorithm for the simulation of fluid flow. Our aim 
was to develop an algorithm which is fast enough to generate realistic looking 
computer animations of fluid motion in real time. Figure~\ref{figure3} shows a 
sample screen shot from a simulation which runs smoothly on standard hardware.
\begin{figure}
\centering\includegraphics[width=0.7\textwidth]{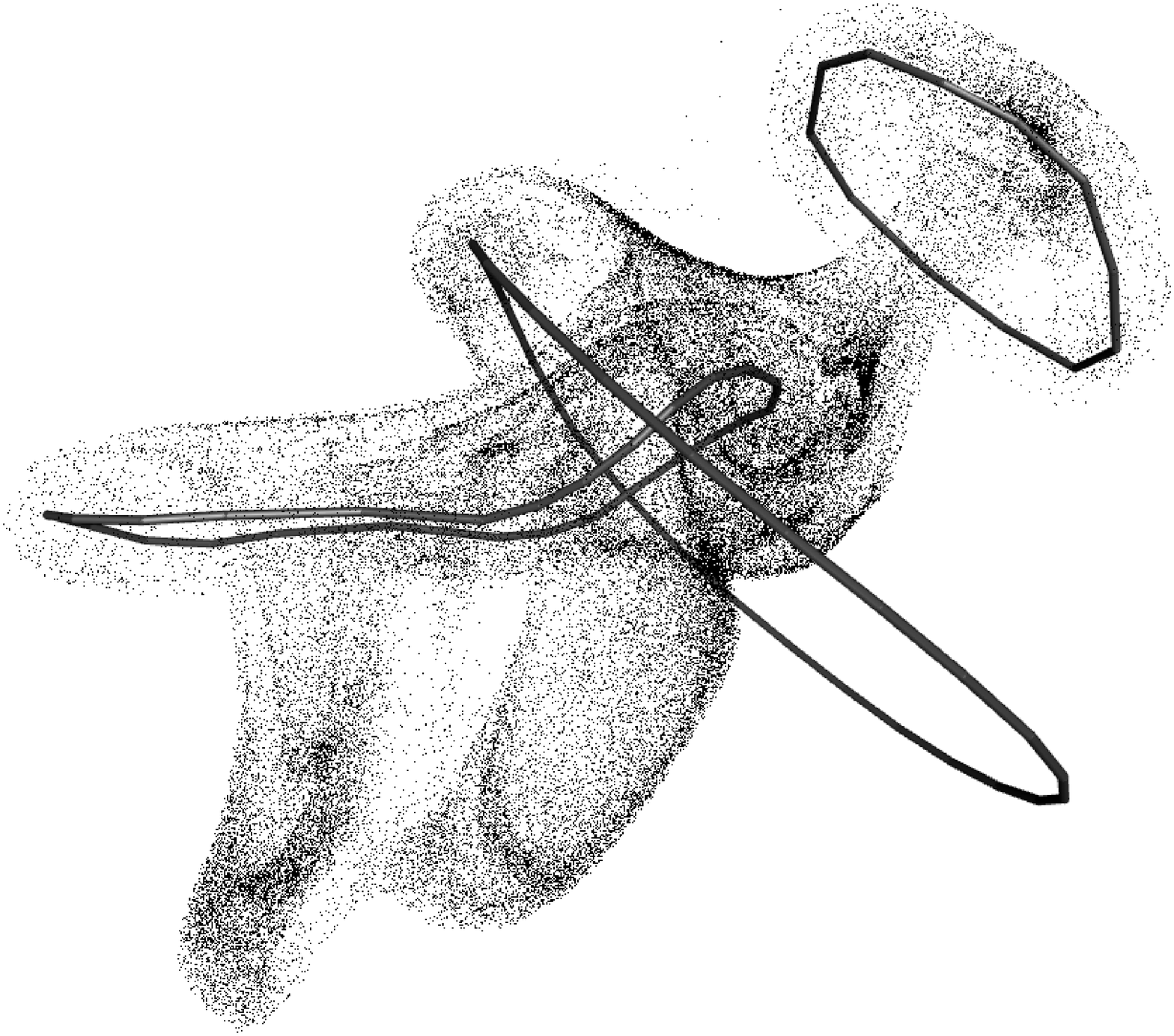} \caption{$256^2$ 
fluid particles evolving under the influence of three polygonal vortex 
filaments.}
\label{figure3}
\end{figure}
We assume the vorticity is concentrated along a few vortex rings,
which we represent by closed polygons. Their motion is governed by a
mixture of the velocity field induced by the polygonal vortex rings
via the smoothed Biot-Savart formula~(\ref{velocityField}) of
Section~\ref{polygonalFilaments}, and Darboux transformations which
approximate a time step of the polygonal smoke ring flow as explained
in Section~\ref{darboux}. The rationale behind this scheme is that the
velocity field induced by an edge of a polygonal vortex filament is
zero on that edge itself. Thus, the adjacent edges do not contribute
to the velocity of a vertex. The Darboux transforms make up for this
lack of local interaction. The following is a summary description of
the algorithm. Details (in particular how we set the parameters $r_i$
and $l_i$ of the Darboux transformation) are given below.

\medskip
\begin{itemize}
\item[] \textbf{input:}
	\begin{itemize}
   \item[$\bullet$] positions $\gamma_{ij}$ of the $j$th vertex of the $i$th 
   polygonal vortex filament $\gamma_i$, where $i=1\ldots m$, $j=1\ldots n_i$.
   \item[$\bullet$] strengths $\Gamma_i$ and smoothing (thickness) parameters 
   $a_i$ of the vortex filaments.
   \item[$\bullet$] positions $p_i\in\R^3$ of advected particles, where 
   $i=1\ldots k$.
   \item[$\bullet$] time-step $\Delta t$ .
    \end{itemize}
   \item[] \textbf{loop:}
		\begin{itemize}
       \item[\small 1] Compute a double Darboux transform $\eta_{i}$ with 
       parameters $\mp r_i, l_i$ of each polygon~$\gamma_i$. 
       $\gamma_{ij} \leftarrow \eta_{ij}$.
       \item[\small 2] Solve 
       $\dot\gamma_{ij}=u(\gamma_{ij})$ for time-step $\Delta t$, where $u(x)$ 
       is the velocity field obtained by the smoothed Biot-Savart 
       formula~(\ref{velocityField}).
       \item[\small 3] Update the particle positions $p_i$ by 
       solving $\dot p_i=u(p_i)$ for time-step~$\Delta t$.
  	\end{itemize}
\end{itemize}

\medskip In Step 1, we determine the parameters $l_i$ and $r_i$ as follows. The 
amount of smoke ring flow needed to make up for the lack of local interaction 
depends on the thickness $a_i$, the number of edges $n_i$ and the total length 
$L_i$ of $\gamma_i$. Since we do not know the correct speed for an arbitrary 
polygon, we determine the parameters for the test case of a regular $n_i$-gon 
with same strength, thickness and total length. We choose the parameters in 
such a way that for the regular $n_i$-gon the sum of self-induced velocity from 
the Biot-Savart formula~(\ref{velocityField}) plus the effect of a double 
Darboux transform coincides with the analytically known speed $U_i$ for a 
circle with same length $L_i$:
\begin{equation}
	\label{eq:U}
U_i = \frac{\Gamma_i}{2 L_i} \left( \ln \frac{4 L_i}{\pi a_i} -1 \right),
\end{equation}
compare~\cite[p.~212]{Saffman}. We compute the self-induced speed $\tilde U_i$ 
of the $n_i$-gon by evaluating the smoothed Biot-Savart 
formula~(\ref{velocityField}) at one vertex for all edges of the $n_i$-gon. 
This speed is slower than $U_i$ because the adjacent edges have no influence on 
a vertex, see Section~\ref{polygonalFilaments}. Now we choose $r_i$ and 
$l_i$ such that a double Darboux transformation translates the regular 
$n_i$-gon by a distance of $(U_i - \tilde U_i)\, \Delta t$. A single Darboux 
transform of the regular $n_i$-gon is a translation in binormal direction plus 
a non-zero rotation about the centre axis. The rotation cancels out for a 
double Darboux transform and is therefore arbitrary. We choose the rotation 
angle to be $2 \pi / n_i$, which leads to the following formulas for $l_i$ and 
$r_i$:
\begin{equation*}
l_i=\sqrt{\left(L_i/n_i\right)^2 + \sigma_i^2}\;, \qquad r_i=\sigma_i 
\cot(\pi/n_i)\, ,
\end{equation*}
where we have abbreviated $\frac 12 (U_i - \tilde U_i)\, \Delta t$ by 
$\sigma_i$.

\medskip In Step~2, we use the fourth order Runge-Kutta scheme (RK4) to solve 
the ordinary the differential equation $\dot x=u(x)$ for the time-step $\Delta 
t$. To advect the large number of particles in Step~3 we use second order 
Runge-Kutta (RK2), where we use the two polygon positions after Step~1 and 
Step~2 as intermediate values. To improve performance further, this step is 
computed on the computer's graphics chip (GPGPU).

Evaluating $u(x)$ via Equation~(\ref{velocityField}) is unproblematic, because 
the integral on the right hand side can be solved explicitly for straight line 
segments; see Equation~(\ref{blurredAntiDerivative}) in 
Section~\ref{polygonalFilaments}.

\bibliographystyle{unsrt}
\bibliography{document}

\end{document}